
\documentclass[]{interact}

\usepackage{epstopdf}
\usepackage[caption=false]{subfig}
\usepackage{tikz-qtree}
\usepackage{array} 
\usepackage{tikz}
\usepackage{multicol}
\usetikzlibrary{arrows}
\usepackage[numbers,sort&compress]{natbib}
\bibpunct[, ]{[}{]}{,}{n}{,}{,}
\makeatletter
\def\NAT@def@citea{\def\@citea{\NAT@separator}}
\makeatother

\theoremstyle{plain}
\newtheorem{theorem}{Theorem}[section]
\newtheorem{problem}{Problem}

\theoremstyle{definition}
\newtheorem{definition}[theorem]{Definition}
\newtheorem{example}[theorem]{Example}

\theoremstyle{remark}
\newtheorem{remark}{Remark}
\newtheorem{notation}{Notation}
\newcommand{\biggg}{\rotatebox[origin=c]{90}{$\lor$}}
\newcommand{\bg}{\ \biggg\ }
\begin{document}

\articletype{Research Article}

\title{ A Graph Theory Approach for Regional Controllability of Boolean Cellular Automata. }

\author{
\name{S. Dridi\textsuperscript{a,b}, S. El Yacoubi\textsuperscript{a}\thanks{Corresponding author: Samira El Yacoubi. Email: yacoubi@univ-perp.fr}, F. Bagnoli\textsuperscript{b}, A. Fontaine\textsuperscript{c}}
\affil{\textsuperscript{a}Team Project IMAGES ESPACE-Dev, University of Perpignan, France; \textsuperscript{b} Department of Physics and Astronomy and
CSDC, University of Florence, Italy; also INFN, sez. Firenze \textsuperscript{c}  Universit\'e de Guyane - UMR Espace-Dev, Cayenne, France}
}

\maketitle

\begin{abstract}

Controllability is one of the central concepts of modern control theory that allows a good understanding of a system's behaviour. It consists in constraining a system to reach the desired state from an initial state within a given time interval. When the desired objective affects only a sub-region of the domain, the control is said to be regional. The purpose of this paper is to study a particular case of regional control using cellular automata models since they are spatially extended systems where spatial properties can be easily defined thanks to their intrinsic locality. We investigate the case of boundary controls on the target region using an original approach based on graph theory. Necessary and sufficient conditions are given based on the Hamiltonian Circuit and strongly connected component. The controls are obtained using a preimage approach. 
\end{abstract}

\begin{keywords}

 Regional Controllability; Deterministic Cellular Automata; Graph Theory; Hamiltonian Circuit; Strongly connected component.
\end{keywords}

\section{Introduction}
Control theory is a branch of mathematics that deals with the behaviour of dynamical systems studied in terms of inputs and outputs. With the recent developments in computing, communications, and sensing technologies, the scope of control theory is rapidly evolving to encompass  the increasing complexity of real-life phenomena. Controllability and observability are two major concepts of control theory that have been extensively developed during  the last two centuries. The concept of controllability refers to the ability of designing control inputs so as to steer the state of the  system to desired values within an interval
time $[0,T]$ while the observability describes whether the internal state variables of the system can be externally measured. These concepts are being increasingly useful in a wide range of  applications such as: biology, biochemistry, biomedical engineering, ecology, economics etc. 
\cite{cury2013systems,intriligator1980applications}. 
Controllable and observable systems have been characterized so far using the Kalman condition in the linear case. The aim of this paper is to find a general way to give a necessary and sufficient condition for controllability of complex systems via cellular automata models. We  concentrate in this work on regional controllability via boundary actions on the target region $\omega$ that consists in achieving an objective only in a subdomain of the lattice  when some specific actions are exerted on the target region boundaries.

The concept of controllability has been widely studied  for both finite \cite{sontag2013mathematical,liu2009elementary} and infinite dimensional systems \cite{street1995analysis}. 
As in many practical problems one is interested in achieving some objectives only on a 
restricted given sub-region, the notion of controllability has been extended to the so-called regional controllability concept that has been  introduced by El Jai and Zerrik in
\cite{zerrik2000} and well studied in several works
\cite{el1995regional,zerrik2004regional,lions1986exact}.
For distributed parameter systems, the term regional has been used to refer to control problems in which
the desired state is only defined and may be reachable on a portion of the domain $ \Omega $. 

As for controllability issue, one can consider controls applied on the boundary of the domain  or, in the case of regional controllability,   on the
boundaries of the considered subregion. The controls 
will steer the system from an initial state to a desired target on a subregion $\omega$ during a fixed time $T$.

Boundary regional controllability problems for distributed parameter systems have been mainly described so far,  by partial differential equations and considered for linear or nonlinear, continuous or discrete systems. In this paper, we propose to investigate  these  problems by using cellular automata as they have been often  considered as a good alternative to partial differential equations. 
\cite{toffoli1984cellular,green1990cellular}. 

Cellular automata (CA for short) are discrete dynamical systems considered as the simplest models of spatially extended systems. They are  widely used for studying the mathematical properties of discrete systems and for modelling physical systems, \cite{chopard2012cellular,toffoli1984cellular}. However, control of systems described by CA remains very difficult. The techniques for controlling discrete systems are quite different from those used in continuous ones, since discrete systems are in general
strongly nonlinear and the usual linear approximation
cannot be directly applied. We restrict our study to the case of some CA rules.

A Boolean cellular automaton is a collection of cells that can be in one of two states, on and off, 1 or 0, $S=\{0,1\}$. The states of each cell varies in time depending on the states of their neighbourhood and a local transition function that defines the connections between  the cells. This function can be deterministic   or probabilistic,  synchronous or asynchronous,  linear or nonlinear. 

We focus in this study, on a particular case of  deterministic and synchronous  transition rule that calculates the  output of   each cell,   at each time step  as a function of the current state of the cell and the states of its two immediate neighbours.  A CA  configuration or global state defines the image representing the states at time $t$, of the whole lattice cells. The  CA evolution is described by the succession of  configurations at different time steps. 

This evolution in the case of  Boolean deterministic CA can be represented by an oriented graph where the vertices corresponds to the configurations obtained from the binary representation converted to decimal. There is an arc between two vertices $v_{1}$ and $v_{2}$ if the configuration corresponding to $v_{2}$ can be obtained from the configuration obtained from $v_{1}$ where the local transition function is applied.

Some regional control problems has been studied using CA, see for
instance \cite{bagnoli2017toward,el2003analyse,fekih2006regional}. In
\cite{el2002regional}, the control of 1D and 2D additive CA has been
studied by exploring a numerical approach based on genetic
algorithms. The regional control problem has been studied on
deterministic cellular automata in \cite{bagnoli2017toward} and on
probabilistic CA in \cite{bagnoli2018regional}.  In  \cite{sdridi} an approach based  on Markov Chain has been used to prove the  regional controllability of linear 1D and 2D cellular automata instead of using the well known Kalman criterion. 

 In this paper, we pursue our investigation on regional control problems of deterministic CA by using a graph theory approach. The evolution of a controlled CA can be represented by an oriented graph where the vertices represent the possible configurations in the controlled region $\omega$ which are related to each other by arcs. A couple of vertices $(v_1,v_2)$ (CA configurations restricted to $\omega$) are related by arc if it exists a boundary control $(\ell,r)$  such that $v_2$ is reachable starting from $v_1$. In this paper we focus on the problem of regional controllability by applying the controls on the boundaries of the region $\omega$. We prove the regional controllability by checking the existence of Hamiltonian Circuit which  allows us to give a sufficient condition and necessary condition. We address also the problem of decidability of regional controllability by looking for a strongly connected component in the graph related to the controlled CA. A necessary and sufficient condition for regional controllability is then obtained. Finally, the controls required on the boundaries of $\omega$ that ensure  the regional controllability are obtained using a method for generating the preimages.

The paper is organized as follows.  Section 2 provides necessary definitions and  section 3 presents the problem of regional controllability. Section 4 is devoted to the formulation of the problem using transition graphs and section 5 gives necessary and sufficient  conditions for regional controllability. It first deals with the existence of a Hamiltonian Circuit in the graph representing the Boolean CA global evolution and then  the decidability criterion of regional controllability
by establishing a relation with strongly connected component is given.  According to this criterion, we give a classification of selected rules in the one-dimensional CA case.  In section 6, we introduce a method to trace the configurations where a regional control is possible using a method based on preimages. Finally, a conclusion will be given in section 7. 

\section{Basic definitions}

\begin{definition}\cite{el2003analyse}
  A cellular automaton (CA for short) is defined by a tuple
  $ A=( \mathcal{L}, S, \mathcal{N},f) $ where: \\
  \begin{enumerate}
      \item $ \mathcal{L} $ is a cellular space which consists in a
        regular paving of domain $ \Omega $ of $ \mathbb{R}^{d} $, 
        $d=1,2 $.

      \item $ S $ is a finite set of possible states.
      \item $ \mathcal{N} $ is a function that defines the
        neighborhood of a cell $ c $. We denote:
	\begin{eqnarray*}
	  \mathcal{N}: \mathcal{L} & \rightarrow & \mathcal{L}^{r}\\ c
          & \rightarrow & \mathcal{N}(c)=(c_{i_{1}},c_{i_{2}},\dots,c_{i_{r}})
	\end{eqnarray*} 
	where $ c_{i_{j}} $ is a cell for $ j=1,\dots,r $ and $ r $ is the
        size of the neighborhood $ \mathcal{N}(c) $ of the cell $c$.

      \item $ f $ is the transition function that allows to compute
        the state of a cell at time $ t+1 $ according to the state of
        its neighborhood at time $ t $. It is defined as follow:
	\begin{eqnarray*}
	  f: S^{r} & \rightarrow & S\\ s_{t}(\mathcal{N}(c)) &
          \rightarrow & f(s_{t}(\mathcal{N}(c))) = s_{t+1}(c)
	\end{eqnarray*} 
	where $ s_{t}(c) $ is the state of a cell $ c $ at time $ t $
        and $ s_{t}(\mathcal{N}(c))=\{s_{t}(c^{'}),c^{'}\in
        \mathcal{N}(c)\} $ is the state of the neighborhood of $c$.
  \end{enumerate}
\end{definition}

\begin{definition}
  
  \begin{itemize}
    \item The configuration of a CA at time $ t $ corresponds to the
      set $\{s_{t}(c),c \in \mathcal{L}\} $.
 
    \item The global dynamics of a CA is given by the function:
      \begin{eqnarray*}
        F: S^{\mathcal{L}} & \rightarrow & S^{\mathcal{L}}\\ \{s_{t}(c),c
        \in \mathcal{L}\} & \rightarrow & \{s_{t+1}(c),c \in \mathcal{L}\}
      \end{eqnarray*} 
      
      F maps a configuration of CA at time $ t $ a new
      configuration at time $ t+1 $.
    \item We denote by $\omega$ the region we want to control. We
      have: $\omega=\{c_1,\dots,c_{n}\}$.
  \end{itemize}
\end{definition}

\begin{definition}
 An elementary CA (ECA) is a one-dimensional cellular automaton constituted by an array of cells  that take their
 states in $\{0,1\}$ and change it depending only on the states of their
 neighbours. One can say that the neighborhood of a cell is given by the cell itself and its right and left nearest neighbours. \\

Since there are $2^3=8$ possible binary states for a cell and its two immediate neighbours, there are a total of $2^8=256$ ECA, each of which can be indexed with an 8-bit binary number \cite{wolfram1983,Wolfram2002}.

 
\end{definition}
\section{Problem Statement}
Let us consider:
\begin{itemize}
    \item a 1D-cellular domain $\mathcal{L}$  of $N$ cells,  
    \item a discrete time horizon $I=\{0,1, . . . , T\}$,
   \item  a sub-domain $\omega$ that defines the controlled  region where  we want to drive the CA towards a given configuration.

   
   It will  contain  $n$ cells denoted by $c_{i}$,  $i = 1, 2, \cdots, n$, $n < N$.
   \item $\overline{\omega}=\{c_{0},c_{1},\dots,c_{n},c_{n+1}\} = \omega \cup \{c_{0},c_{n+1}\}$  where  $\{c_{0},c_{n+1}\}$ are the boundary cells of $\omega$ where we apply control.
\end{itemize}
 
We are interested in the problem of regional controllability via actions exerted  on the boundary of  the target region $\omega$ which is a part of the cellular automaton  space as illustrated in Figure \ref{fig1} for $n=3$. Our aim is to determine the values  at the boundary cells in order to obtain a  specific behaviour on $\omega$.

\begin{figure}[!ht]
\centering
{%
\resizebox*{9cm}{!}{\includegraphics{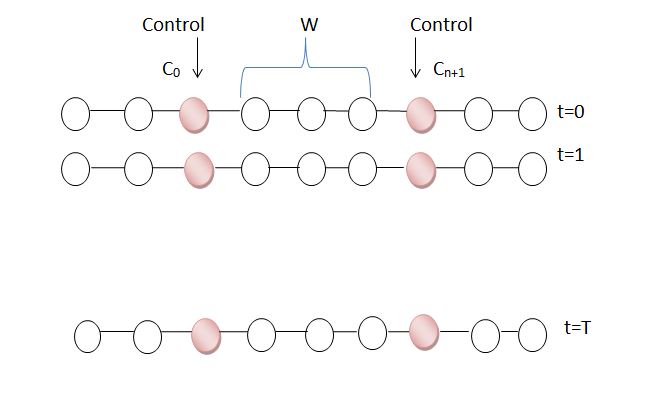}}}
\caption{Regional control of one-dimensional CA.} 
\label{fig1}
\end{figure}

\begin{definition}\cite{el2002regional}
 Let $s_{d} \in S^{\omega}$ ($s_d: {\omega} \rightarrow S$) be a desired profile to be reached on $\omega  \subset \mathcal{L}$. The CA is said to be regionally controllable for $\omega$ at time $T$ 
 if there exists a control sequence  $ u=(u_{0},\dots,u_{T-1}) $ where
 $u_i=(u_i(c_{0}),u_i(c_{n+1})), \quad i=0,\dots,T-1$ such as:
  \begin{displaymath}
    s_{T}=s_{d} \quad \text{on} \quad \omega
  \end{displaymath}
  where $ s_{T} $ is the final configuration at time $ T $ and $s_{d}$   is the desired configuration.
\end{definition}


\begin{notation}
  Let us introduce the following notations:\\
  \begin{eqnarray*}
    \ell^t=u_t(c_{0}) \\
    r^t=u_t(c_{n+1})\\
 s^t_i=s_t(c_{i}) 
 \end{eqnarray*}
$\forall~i.~ 1 \le i \le n$. \\

  $(l \cdot x \cdot r)$ is the concatenation operation
    describing the CA state on $\overline{\omega}$
    where $x=s(c_{1}),\dots,s(c_{n})$, $\ell = u(c_{0})$ and $r = u(c_{n+1})$.
\end{notation}

\begin{problem}
  Starting from an initial condition and for a given  desired configuration $s_d$, the considered problem of regional controllability   consists in finding the control required on the boundaries   $\{c_{0},c_{n+1}\}$, in order to get at time $T$, the  configuration $s_d$ in the   controlled region $\{c_{1},\dots,c_{n}\}$, such that
  $s_{d}(c_{i})=s_{T}(c_{i}) $ $ \forall i=1,\dots,n $, for a given 
  time horizon $T$. 
\end{problem}

\begin{example}\label{ex:1}
 Consider the  Wolfram's rule $90$ for which the  evolution  can completely be described by a table mapping the next state  from all possible combinations of three inputs $(s_{-1},s_0,s_{+1})$ according to the sum modulo 2 of the state values of the cells to its left and to its right $s_{-1} \oplus s_{+1} $: 

  \begin{center}
    \begin{tabular}{cccc}
      $f_{90}$: & $111$ & $\mapsto$ & $0$\\
      & $110$ & $\mapsto$ & $1$\\
      & $101$ & $\mapsto$ & $0$\\
      & $100$ & $\mapsto$ & $1$\\
      & $011$ & $\mapsto$ & $1$\\
      & $010$ & $\mapsto$ & $0$\\
      & $001$ & $\mapsto$ & $1$\\
      & $000$ & $\mapsto$ & $0$\\
    \end{tabular}
  \end{center}

 For instance, with $n=6$ (cf. Figure \ref{fig:ex1D}), if we assume starting  at time $0$ with   an initial configuration 
  $\{s_1^0, s_2^0, s_3^0, s_4^0,s_5^0, s_6^0\} = \{011100\}$ on $\omega = \{c_{1}, \cdots, c_{6}\}$ and given a desired null state on $\omega$, there exists a control $u=(u_0, u_1, u_2)$ where $u_0 = (\ell^0, r^0) = (0,1), u_1 = (\ell^1, r^1) = (1,0), u_2 = (\ell^2, r^2) = (1,0) $ that are applied on cells $c_{0}, c_{7}$,  such that the final CA  configuration on $\omega$ obtained at time $T=3$ from the evolution of rule 90,   is  $\{s_1^3, s_2^3, s_3^3, s_4^3,s_5^3, s_6^3\} = \{000000\}$.
 
  \begin{figure}
  \centering
  \begin{multicols}{2}
\begin{tabular}{l|c|r}
  $0000$ & \color{blue}$011100$ & $0000$ \\
  $0000$ & \color{blue}$110110$ & $0000$ \\
  $0000$ & \color{blue}$110111$ & $0000$ \\
  \hline
  & $110101$ & 
\end{tabular} 
  \begin{tabular}{l|c|r}
  $000\color{red}0$ & $\color{blue}011100$ & $\color{red}1\color{black}000$ \\
  $000\color{red}1$ & \color{blue}$110111$ & $\color{red}0\color{black}100$ \\
  $000\color{red}1$ &\color{blue} $010101$ & $\color{red}0\color{black}010$ \\
  \hline
  & $000000$ &
\end{tabular}
  \end{multicols}
  \caption{The  evolution   of CA Wolfram rule 90 on the region $\omega = \{c_{1}, \cdots, c_{6}\}$ starting with the same initial configuration;  on the left  without control and  on the right with control.}
     \label{fig:ex1D}
\end{figure}

\end{example}

\begin{remark}

The same problem can be defined on  two-dimensional CA.  For example,  we can apply the control on
one side of the boundary or on the whole  boundary cells of the controlled region $\omega$ in order to get the desired state inside $ \omega $ (see Figure
\ref{fig2}).


\end{remark}

\begin{figure}[!ht]
\centering{%
\resizebox*{9.5cm}{!}{\includegraphics{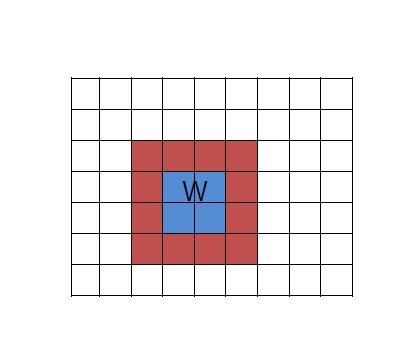}}}
    \caption{Regional Control of two-dimensional CA} \label{fig2}
\end{figure}

  \begin{example}

 Consider the following local evolution rule of a  two dimensional CA  given by the function: 
  \begin{displaymath}
    s^{t+1}(c_{i,j})=s^{t}(c_{i-1,j}) \oplus s^{t}(c_{i+1,j}) \oplus s^{t}(c_{i,j-1}) \oplus s^{t}(c_{i,j+1})
     \label{Rule2D}
  \end{displaymath}

We consider a  controlled region given by the square $\omega=\{c_{1,1},c_{1,2},c_{2,1},c_{2,2}\}$.\\

For a given initial configuration  given by $\{s_{1,1}^0, s_{1,2}^0, s_{2,1}^0, s_{2,2}^0\} = \{1,0,0,0\}$ on $\omega$, we first let the system evolve without applying controls and  get the final configuration $\{s_{1,1}^1, s_{1,2}^1, s_{2,1}^1, s_{2,2}^1\} = \{0,1,1,0\}$ at time $T=1$. \\

We look for  controls applied on the boundary cells of $\omega$ in order to obtain a desired configuration  consisting of all 1s on $\omega$, this can be obtained by controls illustrated in red in Figure~\ref{fig:ex2D}.
  
  \begin{figure}
      \centering
 \begin{minipage}[t]{0.4\textwidth}
    $$\begin{pmatrix} 
       \color{red}0 &  \color{red}0 &  \color{red}0 &  \color{red}0 \\
       \color{red}0 & \color{blue}1 & \color{blue}0 &  \color{red}0\\
      \color{red} 0 & \color{blue}0 & \color{blue}0 &  \color{red}0\\
       \color{red}0 &  \color{red}0 &  \color{red}0 &  \color{red}0
    \end{pmatrix}
    $$
    \centering
    T=0 
    $$
    \begin{pmatrix} 
     - & - & - & - \\
      - & \color{blue}0  & \color{blue}1 & -\\
      - & \color{blue}1 & \color{blue}0 & -\\
      - & - & - & -
    \end{pmatrix}
    $$
    \centering
    T=1
  \end{minipage}
  \begin{minipage}[t]{0.4\textwidth}
   $$\begin{pmatrix} 
       \color{red}0 &  \color{red}1 &  \color{red}1 &  \color{red}0 \\
       \color{red}0 & \color{blue}1 & \color{blue}0 &  \color{red}1\\
      \color{red} 1 & \color{blue}0 & \color{blue}0 &  \color{red}1\\
       \color{red}0 &  \color{red}1 &  \color{red}0 &  \color{red}0
    \end{pmatrix}
    $$
    \centering
    T=0 
    $$
    \begin{pmatrix} 
     - & - & - & - \\
      - & \color{blue}1  & \color{blue}1 & -\\
      - & \color{blue}1 & \color{blue}1 & -\\
      - & - & - & -
    \end{pmatrix}
    $$
    \centering
    T=1
  \end{minipage}
      \caption{Evolution of the CA rule example \ref{Rule2D} on $\omega$ in the autonomous and controlled cases on the left and right matrices respectively.  }
      \label{fig:ex2D}
  \end{figure}

  \begin{remark}
  
  We can define asymmetric controls i.e, we keep a part of the boundaries fixed ( for instance at 0) and act on a subset of the boundary of the controlled region (red cells) in order to  get the desired state (see Figure \ref{asymetric}).

			\begin{figure}[!ht]
\centering
{%
\resizebox*{7cm}{!}{\includegraphics{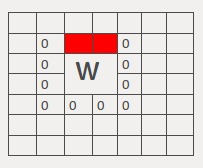}}}
\caption{Regional control of two dimensional CA with asymmetric controls} 
\label{asymetric}
\end{figure}
  \end{remark}
\end{example}		
The above examples of CA in one and two dimensional cases  show that it is possible to steer a system from an initial state to a desired target on a subregion of the domain by acting on its boundary. The aim of this paper is to generalize this results and find necessary and sufficient conditions for  the regional controllability of some Boolean CA rules. The proposed method in the following will be based on transition graphs.

\section{Transition graph approach and regional controllability problem}
\label{sect:transG}

In this section, we describe the main tool on which this paper is based: the transition graph $\Upsilon$. It  was inspired by the one introduced in ref \cite{sdridi} where the authors have built a transformation matrix based on all possible state combinations of the CA to show the transition  steps. Let us describe here the construction of $\Upsilon$ and the transformation matrix $\mathcal{C}$ that is the associate adjacency matrix of $\Upsilon$.\\

   Recall that the evolution of controlled CA for one step can be represented by a directed graph  where the vertices represent the configurations and the arcs  represent the transition from a configuration to another one in one step \emph{i.e.} by applying the global transition function $F$. Consider an Elementary CA where the controlled region $\omega$ is of size $|\omega|$ and controls are applied on its two boundary cells  $ \{c_{0},c_{n+1}\} $. When considering the restriction of $F$ on $\mathcal{S}^{|\omega|}$, there exists a bijection between  $\mathcal{S}^{|\omega|}$ and the set of integers $[0:2^{|\omega|}-1]$ that represents CA configurations on ${\omega}$ as $|\omega|$-bit binary numbers. Let $\lambda$ be a vertex labelling such that for every vertex $v$, $\lambda(v)$ is the Boolean conversion of vertex $v$. 
   
   We define the transition graph $\Upsilon=(V,A)$ as follow where the vertices $V$ corresponds to each possible configuration of the region $\omega$ and $A$ is the set of arcs. Let $v_1$ and $v_2$ be two vertices in $V$,
   there is an arc from the vertex $v_1$ to the vertex $v_2$ if there exists a control $u=(\ell,r)\in
\{(0,0) ; (0,1) ; (1,0); (1,1)\}$ 
such that $\lambda(v_2)$ is equal to 
$ F_{|_{\mathcal{S}^{\overline\omega}}}(\ell\cdot
{\lambda(v_{1})}\cdot r) $,
where the $\lambda(v_{1})$ denotes the configuration in the controlled region at time $t$ and $\lambda(v_{2})$ denotes the configuration in the controlled region at time $t+1$.

We denote by $\mathcal{C}$ the transition matrix which is the associate adjacency matrix of the graph $\Upsilon$. 
The transition matrix is built as a Boolean matrix of size $ 2^{|\omega|} \times 2^{|\omega|}$. 
There is a $1$ at position $(i,j)$, the $i{\mbox{th}}$ row and $j{\mbox{th}}$ column,  if there is an arc between vertices $i$ and $j$ for all
$i,j$ in $[0:2^{|\omega|}-1]$. Otherwise it stays at $0$.\\

We proceed as follow to construct the transition graph $\Upsilon$ (the algorithm is given in the appendix). For each 
vertex $v$, we compute the four configurations (represented by 
$u_1, u_2, u_3, u_4$) obtained by the application of the global transition 
function $F_{|_{\mathcal{S}^{\overline\omega}}}$ to the four possible configurations 
obtained by the concatenation of the controls 
($(0, 0) ; (0,1) ; (1, 0) ; (1, 1)$) on the extremities of $v$. Then we add 
an arc from $v$ to each of the four $u_i$. In total, the time complexity to 
build $\Upsilon$ is $O(|V|)$ \emph{i.e.} $O(2^{|\omega|})$ where $|\omega|$ is the size of the
controlled region $\omega$ in the CA. The space complexity is the size of the Boolean matrix $\mathcal{C}$: $O(|V|\times |V|)=O(2^{|\omega|}\times 2^{|\omega|})$. Note that the number of 
arcs is at most $4\times |V|$. 

\begin{remark}
Note that we have taken the binary representation for the controlled region in a reverse order (the least significant bit is the first one). For instance, for a controlled region of size 3, we note: 
$\lambda(100)=1$, $\lambda(110)=3$ and $\lambda(001)=4$
\end{remark}

\begin{example}\label{exGC2}
  For instance, consider the rule $30$    where the controlled region is of size $|\omega|=2$ for more simplicity. The corresponding graph is represented in Figure~\ref{figGC1}. 

  The corresponding table for rule 30 is:

  \begin{center}
    \begin{tabular}{cccc}
      $f$: & $111$ & $\mapsto$ & $0$\\
      & $110$ & $\mapsto$ & $0$\\
      & $101$ & $\mapsto$ & $0$\\
      & $100$ & $\mapsto$ & $1$\\
      & $011$ & $\mapsto$ & $1$\\
      & $010$ & $\mapsto$ & $1$\\
      & $001$ & $\mapsto$ & $1$\\
      & $000$ & $\mapsto$ & $0$\\
    \end{tabular}
  \end{center}

  Consider vertex $2$ whose binary conversion is $01$. We have that \\

  $F_{|_{\mathcal{S}^{\overline\omega}}}(0010)=F_{|_{\mathcal{S}^{\overline\omega}}}(0011)=11$ and $F_{|_{\mathcal{S}^{\overline\omega}}}(1010)=F_{|_{\mathcal{S}^{\overline\omega}}}(1011)=01$. Therefore
  there are two arcs from $2$: $(2,2)$ and $(2,3)$ as the binary
  conversion of $3$ is $11$.
\end{example}

\begin{figure}[!ht]
    \begin{center}
      \begin{tikzpicture}
        \node[draw,circle] at (0,0) (A){$0$};
        \node[draw,circle] at (0,-2) (B){$1$};
        \node[draw,circle] at (2,-1) (C){$3$};
        \node[draw,circle] at (4,-1) (D){$2$};
        \path[->,>=stealth',shorten >=1pt,auto,
          node distance=2.8cm, semithick]
        (A) edge [bend left] (B)
        (A) edge [bend right=20] (C)
        (A) edge [bend left] (D)
        (A) edge [loop above] (A)
      (B) edge [bend left] (A)
        (B) edge [bend left=20] (C)
        (B) edge [bend right] (D)
        (B) edge [loop below] (B)
        (C) edge [bend right] (A)
        (C) edge [bend left] (B)
        (D) edge (C)
        (D) edge [loop right] (B);
    \end{tikzpicture}
      \caption{Transition graph $\Upsilon$ for the CA rule
      $30$  where the region to 
      be controlled is of size $2$ and  $\lambda(0)=00$, $\lambda(1)=10$, $\lambda(2)=01$,
      $\lambda(3)=11$.}
        \label{figGC1}
    \end{center}
  \end{figure}
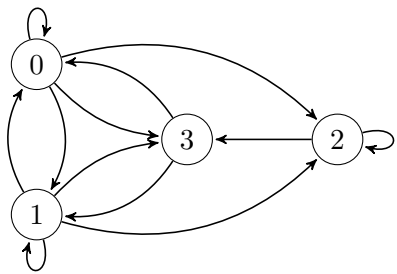

\begin{example}\textbf{Rule $90$ when the controlled region is of size $|\omega|=3$}.\\
Another example is given in Figure ~\ref{figGC2}. We have considered the rule $90$ where the controlled region is of size $3$. 

\begin{figure}[!ht]
  \begin{minipage}[h]{0.450\linewidth}
  
    \centering\includegraphics[scale=0.58]{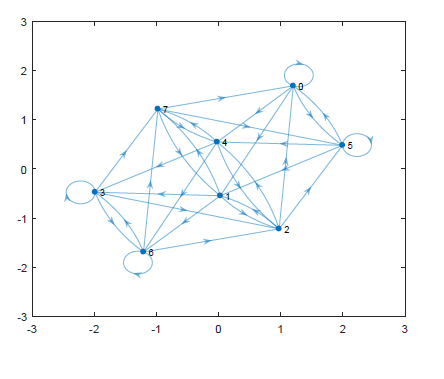}
    
  \end{minipage}
  \begin{minipage}[h]{0.50\linewidth}
    \[ \mathcal  C_{90}=
    \bordermatrix{
      ~ &  $0$ & $1$ & $2$ & $3$ & $4$ & $5$ & $6$ & $7$ \cr
      $0$ & $1$ &    $1$   &  $0$ &    $0$    & $1$  &   $1$ &    $0$ & $0$ \cr 
      $1$&  $0$ &    $0$ &    $1$ &    $1$    & $0$ &    $0$ &    $1$  &   $1$ \cr
      $2$ &  $1$  &  $1$   &$0$&     $0$&     $1$    & $1$ &    $0$     &$0$\cr
      $3$ & $0$  &   $0$  &   $1$     &$1$    & $0$  &   $0$ &    $1$&     $1$ \cr
      $4$ &   $0$ &    $0$    & $1$ &    $1$ &    $0$ &    $0$&     $1$ &    $1$ \cr
      $5$ & $1$   &  $1$ &    $0$ &    $0$ &  $1$ &    $1$  &   $0$   &  $0$\cr
      $6$ & $0$ &    $0$ &    $1$     &$1$ &   $0$    & $0$ &    $1$   &  $1$\cr
      $7$ & $1$ &    $1$ &    $0$ &    $0$    & $1$ &    $1$ &    $0$ &    $0$%
    },   
    \]
  \end{minipage}

  \caption{Transition graph and its adjacency matrix for the CA rule $90$ where the controlled region is of size $3$.}
  \label{figGC2}
  \end{figure}
\end{example}
\section{Characterizing regional controllability for Boolean deterministic CA}

\subsection{Necessary and sufficient  Condition: Hamiltonian Circuit}

In this section we prove the regional controllability for
one-dimensional and two-dimensional CA using a method  based on the
existence of a Hamiltonian circuit. The CA is regionally controllable if all the states are reachable in the target region (starting from each vertex we can achieve another vertex in finite number of steps ). The existence of a Hamiltonian circuit ensures that all vertices (configurations) are visited once and ensures that it exists a time $T$ such as all the configurations are reachable.

\begin{definition}\cite{rosen2017handbook}
  A Hamiltonian circuit of a graph $G=(V,A)$ is a simple directed path
  of $ G $ that includes every vertex exactly once.
\end{definition}

\begin{notation}
  We introduce the notation $a \leadsto b$. This  means that
   $b$ is reachable from $a$ \emph{i.e.} there  is a directed path starting from  $a$ to $b$. In other words, there
  exist vertices $v_1, v_2, \dots, v_i$ such that $(a,v_1), (v_1,
  v_2), \dots, (v_{i-1}, v_i), (v_i, b)$ are arcs in $A$.
\end{notation}

\begin{theorem}
  
  A Cellular Automaton is regionally controllable iff there exists a $t$ such that the graph associated to the transformation matrix $\mathcal{C}^{t}$ contains a Hamiltonian circuit.
\end{theorem}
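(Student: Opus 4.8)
The plan is to translate regional controllability into a reachability statement about the transition graph $\Upsilon$ and then read a Hamiltonian circuit off an appropriate Boolean power of $\mathcal{C}$. First I would record the combinatorial meaning of the matrix powers: since $\mathcal{C}$ is the Boolean adjacency matrix of $\Upsilon$, the entry $(\mathcal{C}^{t})_{ij}$ equals $1$ if and only if there is a directed walk of length exactly $t$ from vertex $i$ to vertex $j$, that is, a sequence of $t$ boundary controls steering the configuration $\lambda(i)$ to $\lambda(j)$ in $t$ steps. Hence an arc of the graph associated to $\mathcal{C}^{t}$ is precisely a length-$t$ controlled transfer between two configurations of $\omega$. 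I would also restate regional controllability in graph terms: by the characterization recalled at the start of this section (from each vertex one can reach any other in finitely many steps), the CA is regionally controllable exactly when $a \leadsto b$ for all vertices $a,b$, i.e. when $\Upsilon$ is strongly connected.

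For the sufficiency direction ($\Leftarrow$) I would argue directly. Suppose the graph of $\mathcal{C}^{t}$ contains a Hamiltonian circuit $H=(w_{0},w_{1},\dots,w_{m-1},w_{0})$ with $m=2^{|\omega|}$, so that $H$ lists all vertices and each consecutive pair $(w_{k},w_{k+1})$ is an arc of $\mathcal{C}^{t}$. Given any initial configuration $a$ and any desired profile $s_{d}=b$, both appear on $H$; travelling along $H$ from $a$ to $b$ and concatenating the length-$t$ control sequences attached to the traversed arcs yields a single admissible boundary control driving $a$ to $b$ over a finite horizon (a multiple of $t$). Thus every target is reachable from every initial state and the CA is regionally controllable.

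For the necessity direction ($\Rightarrow$) I would start from regional controllability, which gives $a \leadsto b$ for all $a,b$ and hence strong connectivity of $\Upsilon$. The remaining and, I expect, the delicate step is to upgrade mutual reachability by walks of \emph{unequal} lengths into mutual reachability by walks of one \emph{common} length $t$, since a Hamiltonian circuit of $\mathcal{C}^{t}$ forces all its arcs to share the same length $t$. The clean route I would take is to establish primitivity of $\mathcal{C}$: the transition graph always carries a self-loop (for instance the quiescent configuration is fixed by the null control whenever $f(0,0,0)=0$, with analogous fixed configurations for the rules under study), and a strongly connected graph possessing a loop is aperiodic, hence primitive. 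Primitivity furnishes an integer $t$ with $\mathcal{C}^{t}$ equal to the all-ones matrix, whose associated graph is the complete digraph on the $m$ configurations and therefore admits a Hamiltonian circuit, any cyclic ordering of the vertices serving as one. The main obstacle is exactly this synchronization of path lengths, and the genuinely delicate point is the periodic case, where no power is all-ones; I would dispose of it precisely by invoking the self-loop supplied by a fixed configuration of the controlled rule to guarantee aperiodicity.
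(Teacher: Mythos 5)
Your sufficiency direction is essentially the paper's own argument: given a Hamiltonian circuit in the graph of $\mathcal{C}^{t}$, walking along it from the initial configuration to the desired one and concatenating the length-$t$ control blocks yields controllability. For the necessity direction, both you and the paper funnel the argument through the same endpoint --- some power of $\mathcal{C}$ is entrywise positive, hence its graph is complete, hence a Hamiltonian circuit exists trivially --- but the paper obtains the positive power by invoking the earlier theorem of \cite{sdridi} (regional controllability implies $\mathcal{C}^{T}>0$) as a black box, whereas you attempt to derive it from first principles: strong connectivity plus a self-loop gives aperiodicity, hence primitivity, hence an all-ones power. You correctly identify the synchronization of walk lengths as the crux; the paper glosses over exactly this point.

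The genuine gap is in your self-loop claim. You assert that the transition graph ``always carries a self-loop'', but your justification covers only rules with $f(0,0,0)=0$; for odd rules ($f(0,0,0)=1$) the quiescent configuration is not fixed by the null control, and ``analogous fixed configurations for the rules under study'' is an assertion, not an argument. Note that rules $1, 85, 105, 153, 165, 195$ --- all with $f(0,0,0)=1$ --- appear in Table~\ref{tab:1} as controllable, so they are precisely cases a proof of necessity must handle. Worse, the claim is false in general: for the complement rule $51$ on a region of size one, the controlled transition graph is the two-cycle $0 \to 1 \to 0$, which is strongly connected, has no self-loop, and has period $2$, so no power of $\mathcal{C}$ is entrywise positive; yet the graph of $\mathcal{C}^{1}$ itself is a Hamiltonian circuit. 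This example shows both that your lemma fails and that the theorem's conclusion can be realized through a periodic graph that the primitivity route can never reach. The periodic case, which you yourself flag as the delicate point, is therefore left unhandled rather than disposed of. To close the gap you would need either to prove that a strongly connected controlled-CA transition graph always contains cycles of coprime lengths (not merely a loop), or to treat the periodic case directly (e.g., show the cyclic classes have equal sizes and construct a circuit by hand), or to do as the paper does and cite the positivity theorem of \cite{sdridi}.
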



\begin{proof}
Let us start with the first implication. 
  Let $\Upsilon=(V,A)$ be the transition graph built in Section \ref{sect:transG} for a CA with a controlled region of size $|\omega|$ and $V=\{v_{1},\dots,v_{2^{|\omega|}}\}$. The graph  $\Upsilon$ will be  represented by an adjacency matrix $\mathcal C$. Let $G_{1}$ be the transition graph associated to the matrix  $\mathcal{C}^{t} $. The proof is based on the following property in graph theory: the $(i,j)$th entry of the matrix $\mathcal{C}^t$ corresponds to the number of paths of length $t$ from vertex $i$ to $j$.
  
  Assume that the CA is regionally controllable at time $T \ge t$. Then some configurations can be reached in less than $T$ steps from any other one. That means that each pair of vertices are linked by a directed path of length at most equal to $T$. Therefore, $\mathcal{C}^T$ will be strictly positive as reported in the theorem in \cite{sdridi} which states that the CA is regionally controllable if there exists a power ${T}$ such that $\mathcal{C}^{T}>0$. The associated graph $G_T$ to the matrix $\mathcal{C}^T$ is therefore a complete graph and it is trivial to find a Hamiltonian circuit in a complete graph  which implies that there exists $t\le T$ such that the graph related to the matrix $\mathcal{C}^{t}$ contains a Hamiltonian Circuit and the direct implication  holds.\\

To prove the converse one, let us assume that  $G_{1}$ contains a Hamiltonian circuit. This means that there is a directed path that goes through all the vertices once. Therefore there exists an order $i_1, i_2, \dots, i_{2^{|\omega|}}$ such that: $(v_{i_1}, v_{i_2})$, $(v_{i_2}, v_{i_3})$, $\dots$, $(v_{i_{2^{|\omega|-1}}}, v_{i_{2^{|\omega|}}})$, $(v_{i_{2^{|\omega|}}}, v_{i_{1}})$ are arcs in $A$.
    And then we have: \[ v_{i}  \leadsto  v_{j}~~ \forall ~i,j\in\{1,\dots,2^{|\omega|}\} ~and ~i\neq j \] Thus, $\exists T \ge t$ such that all the vertices (configurations) are reachable. And the theorem holds.
\end{proof}

As the problem of proving the existence of a Hamiltonian circuit in a
graph is NP-complete, the time complexity can be exponential in the number of vertices of the transformation graph. We improve this criterion  in the next section with a solution in polynomial time that gives a necessary and sufficient condition.


\subsection{Necessary and Sufficient Condition: Strongly connected component}

\begin{definition}\cite{rosen2017handbook}
  A strongly connected component (SCC for short) of a directed graph $
  G $ is a maximal set of vertices $ C \subset V $ such that for every
  pair of vertices $ v_{1} $ and $ v_{2} $ in $C$, there is a directed
  path from $v_{1}$ to $v_{2}$ and a directed path from $ v_{2} $ to
  $v_{1}$.
\end{definition}

\begin{theorem}\label{thm:scc}
  A CA is regionally controllable for a given rule iff the   transition graph $\Upsilon$ associated to the rule has only one SCC.
\end{theorem}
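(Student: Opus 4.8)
The plan is to reduce regional controllability to a pure reachability statement in the transition graph $\Upsilon$ and then to identify that statement with $\Upsilon$ having a single strongly connected component. First I would record the exact dictionary between control sequences and directed paths that is built into $\Upsilon$: by construction, an arc from $v_1$ to $v_2$ exists precisely when some boundary control $(\ell,r)$ steers the configuration $\lambda(v_1)$ to $\lambda(v_2)$ in one step. Hence a directed path $v_1\leadsto v_2$ of length $T$ corresponds exactly to a control sequence $u=(u_0,\dots,u_{T-1})$ driving $\lambda(v_1)$ to $\lambda(v_2)$ at time $T$, and conversely every admissible control sequence traces out such a path. With this correspondence, and interpreting regional controllability for the rule as the requirement that \emph{every} desired profile $s_d$ be reachable from \emph{every} initial configuration, the CA is regionally controllable if and only if $v_i\leadsto v_j$ for all ordered pairs of vertices of $\Upsilon$.

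For sufficiency I would assume $\Upsilon$ has a single SCC. By the definition of an SCC, every pair of vertices is mutually reachable, so in particular $v_i\leadsto v_j$ for all $i,j$. The dictionary above then produces, for each initial configuration and each desired profile $s_d$, a control sequence realizing $s_d$ on $\omega$ at a time equal to the length of the corresponding path; hence the CA is regionally controllable.

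For necessity I would argue by contraposition. Suppose $\Upsilon$ has at least two SCCs. Its condensation, the acyclic digraph obtained by contracting each SCC to a single vertex, is then nontrivial, so it possesses a source component $C_s$ and a distinct sink component $C_t$. Choosing $v\in C_t$ and $w\in C_s$, reachability respects the acyclic order of the condensation, so there is no directed path $v\leadsto w$. Translating back, no control sequence can steer $\lambda(v)$ to the target $\lambda(w)$, so the CA is not regionally controllable. Combining the two directions yields the stated equivalence.

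The path/control dictionary is the routine part, being immediate from the definition of $\Upsilon$. The point that must be handled carefully is the necessity direction: one has to verify that the presence of two or more SCCs genuinely obstructs reachability, which is exactly the observation that the condensation is a DAG with distinct source and sink components and therefore contains an unreachable ordered pair. I would also make explicit at the outset that ``$\Upsilon$ has only one SCC'' is synonymous with ``$\Upsilon$ is strongly connected,'' so that the criterion aligns with the notation $v_i\leadsto v_j$, and I would flag that strong connectivity is precisely the condition under which, for every ordered pair $(i,j)$, some power of $\mathcal{C}$ has a positive $(i,j)$ entry.
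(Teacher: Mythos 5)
Your proof is correct, and its overall skeleton matches the paper's: sufficiency follows from mutual reachability inside a single SCC plus the arc/control dictionary, and necessity is argued by showing that two or more SCCs obstruct reachability. The important difference is in how necessity is justified. The paper splits the vertices into two sets $V_1$, $V_2$ corresponding to two SCCs and asserts that \emph{there is no arc between $V_1$ and $V_2$}; as stated this is false in general, since two distinct SCCs of a directed graph may well be joined by arcs --- what is impossible is having directed paths in \emph{both} directions (otherwise the two components would merge into one SCC). Your argument repairs exactly this point: by passing to the condensation, which is a DAG, and picking a vertex $v$ in a sink component and $w$ in a distinct source component, you exhibit an ordered pair with no directed path $v \leadsto w$, hence a target configuration $\lambda(w)$ unreachable from the initial configuration $\lambda(v)$ under any control sequence. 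So your route buys rigor precisely where the paper's proof is loosest, at the cost of invoking the (standard) condensation construction; the paper's version is shorter but, read literally, its key claim needs the correction you supply. One further point you handle explicitly and the paper leaves implicit: identifying ``only one SCC'' with strong connectivity of $\Upsilon$, and regional controllability of the rule with reachability of every ordered pair of configurations, which is the interpretation under which the equivalence is meaningful.
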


\begin{proof}
  Let $\Upsilon=(V,A)$ be the transition graph built in Section
  \ref{sect:transG} from a controlled region of size $|\omega|$.

  Assume that the graph contains only one SCC. There exists a directed   path which relates each pair of vertices of the graph. Hence there is
  a sequence of controls that permits to go from every configuration
  to any other one. The CA is then controllable on $\omega$ and the direct implication holds.
  
  Assume now that the graph $\Upsilon$ contains more than one SCC, let
  say it contains two. Then, the set of vertices can be
  divided in two sets related to each SCC such as: \[
  V_{1}=\{v_{1},\dots,v_{k}\} \quad
  V_{2}=\{v_{k+1},\dots,v_{2^{|\omega|}}\}\] and there is no arc between
  $V_1$ and $V_2$.  Therefore, there is no control that allows  to
  obtain a configuration represented in $V_{1}$ from a configuration
  in $V_2$ according to the construction of $\Upsilon$. It is impossible since the CA is regionaly controllable and so the converse implication holds.
\end{proof}

\textbf{Time complexity} To find the SCCs, we have used Tarjan's algorithm \citep{Tarjan72depthfirst} which  has a linear
time complexity: $O(|V|+|A|)$ on the graph $\Upsilon=(V,A)$. If we consider a controlled region $\omega$ of size $|\omega|$ and since in that case $|A| \leq 4 |V|$, then the time complexity is $O(|V|) = O(2^{|\omega|})$. \\

\begin{remark}
The regional controllability depends on the rule and the size of the controlled region. The size of the controlled region for the same rule has an impact on the number of SCCs. According to Theorem \ref{thm:scc}, by changing the size of CA, a rule can be sometimes regionally controllable and sometimes not. \\


\end{remark}

In Table \ref{tab:1} the results of our simulations are highlighted. \\

\begin{table}[!ht]
    \centering
    \tbl{Classification of some rules of one-dimensional CA}
    {\begin{tabular}{|c|c|c|}\hline
	Rules & Decidability Criterion & number of SCC\\
	\hline\hline
	0,255 & not controllable & 64 for  $|\omega|=6$ and 16 for $|\omega| =4$ \\
	\hline
	1 & not controllable $|\omega|=4$, 
 controllable $|\omega|=2$ & 8 for $|\omega|=4$ and 1 for  $|\omega|=2$ \\
	\hline
	60,90,102,150,170 & controllable & 1\\
	\hline
	204 & not controllable & $ \bg 1$\\
	\hline 
	2,4,8,16 & not controllable & $ \bg 1$ \\
	\hline 
	105,195,165,153,85 & controllable for all the sizes of CA & 1\\
	\hline
	22 & controllable for $|\omega|=2$, not controllable $|\omega|=5$ & 1 for $|\omega|=2$ and $\bg 1$  for $|\omega|=5$\\
	\hline 
	26 & controllable $ 2\le |\omega|\le 3 $, not controllable $|\omega|=4$& 1 for $ 2\le |\omega|\le 3 $ , 2 for $|\omega|=4$\\
	\hline
	233 & not controllable & $ \bg 1$ \\
	\hline
	3 & controllable $|\omega|=2$, otherwise not controllable & 1 for $|\omega|=2 $ , otherwise  $ \bg 1$\\
	\hline
	\end{tabular}}

    \label{tab:1}
\end{table}

To illustrate the obtained results, we shall give in the following section, some examples in both  one and two dimensional  cellular automata.

\subsection{Examples}
\begin{example}

	Let us consider the two linear Wolfram rules  $ 150 $ and  $ 90 $ \cite{wolfram1984s}. For a controlled region of $ |\omega|= 4 $, the graph of the matrix $ \mathcal{C} $ associated to these two rules is illustrated in the  figure Fig.~\ref{SCC}. It contains one strongly connected component which means that there exists a time $ T  $ where each configuration is reachable.\\

\begin{figure}[!ht]
\label{SCC}
\centering
\subfloat[One SCC associated to the graph of rule 150.]{%
\resizebox*{7cm}{!}{\includegraphics{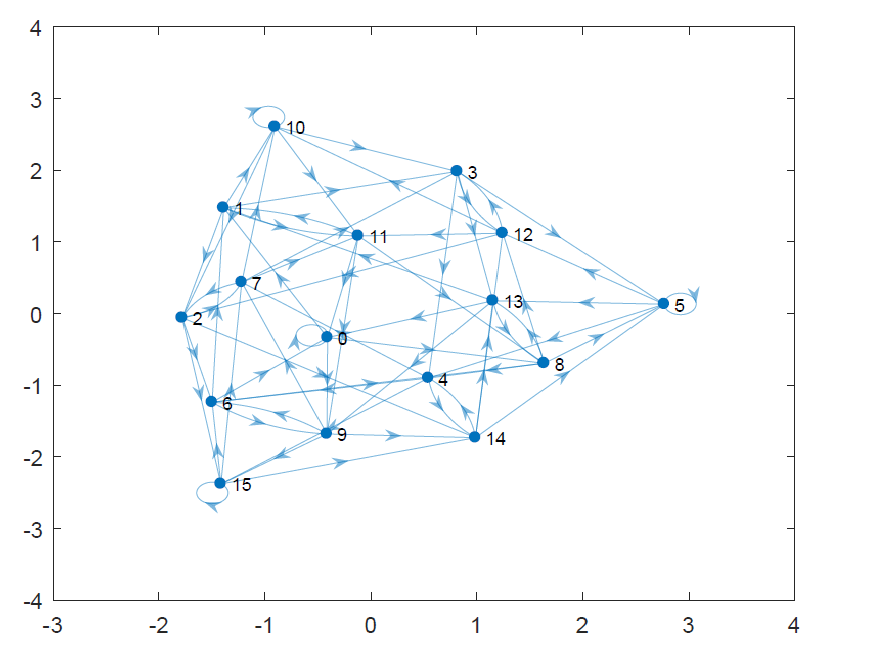}}}\hspace{5pt}
\subfloat[One SCC associated to the graph of rule 90.]{%
\resizebox*{7 cm}{!}{\includegraphics{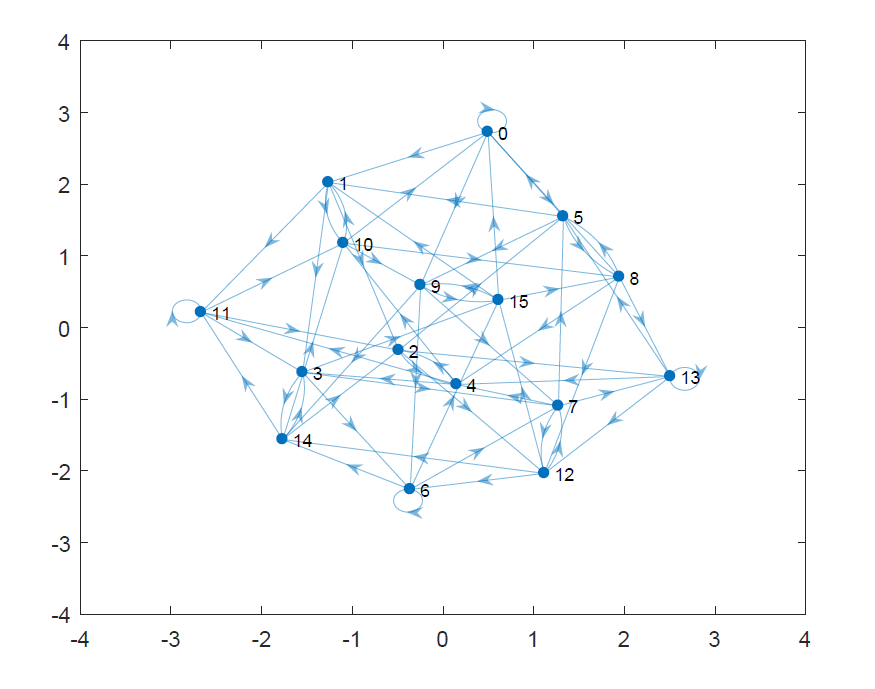}}}
\caption{Graphs of the matrices $\mathcal{C}_{150}$ and $\mathcal{C}_{90}$ respectively.} 
\end{figure}

	\end{example}
	
	\begin{example}
		Wolfram Rule $ 0 $ is not controllable neither its Boolean complement rule $ 255 $ as they  converge to a fixed point. The graph of their matrix $ \mathcal{C} $  contains more then one strongly connected component and the previous theorem states that these rules  are not regionally controllable for every region $\omega$.

		\begin{figure}[!ht]
\centering
\subfloat[rule 0, $|\omega|=6$.]{%
\resizebox*{7cm}{!}{\includegraphics{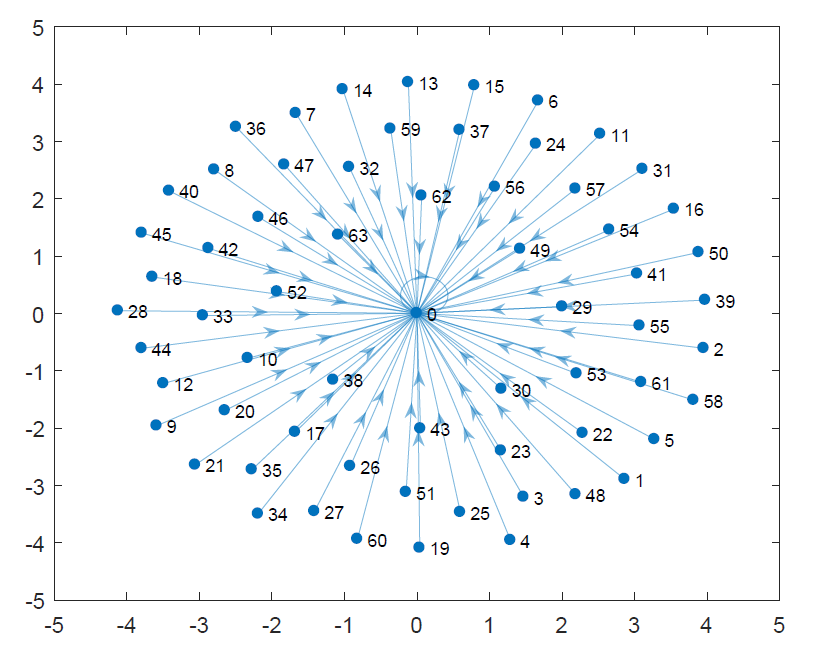}}}\hspace{5pt}
\subfloat[rule 255, $|\omega|=4$.]{%
\resizebox*{7 cm}{!}{\includegraphics{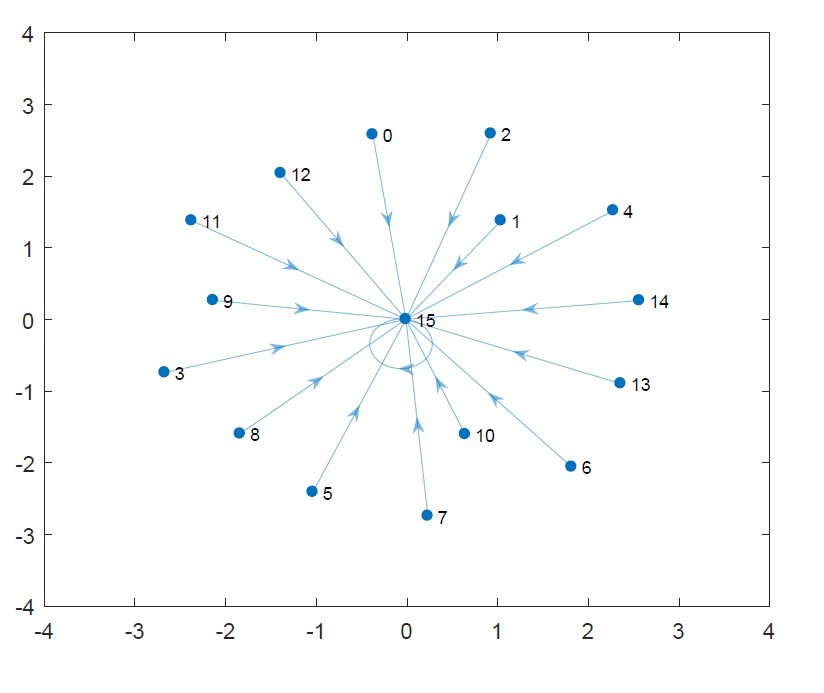}}}
\caption{Graphs related to the matrices $\mathcal{C}_0$ and $\mathcal{C}_{255}$ respectively.} 
\end{figure}

	\end{example}

	\begin{example}
		Let us consider now a two-dimensional cellular automaton. Its local evolution is given by the transition function: 
  \begin{displaymath}
    s^{t+1}(c_{i,j})=s^{t}(c_{i-1,j}) \oplus s^{t}(c_{i+1,j}) \oplus s^{t}(c_{i,j-1}) \oplus s^{t}(c_{i,j+1})
  \end{displaymath}
 that is also denoted by rule $170$ using Wolfram's formalism.  We impose asymmetric controls by setting all cells on the boundaries to $0$ except for the two  red colored cells illustrated in Fig.~\ref{asymetric}. 
  
  
 The obtained graph of the matrix $ \mathcal{C}_{170} $ for $ |\omega|=2 \times 2, \quad \omega=\{c_{1,1},c_{1,2},c_{2,1},c_{2,2}\} $,   contains one strongly connected component and so the CA is regionally controllable.
		
			\begin{figure}[!ht]
\centering
{%
\resizebox*{10cm}{!}{\includegraphics{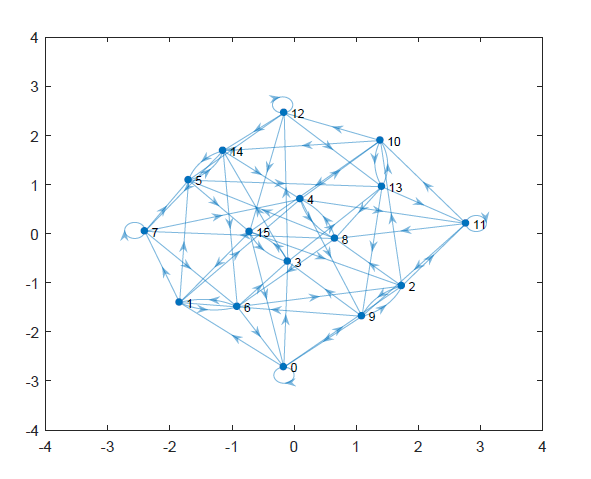}}}
\caption{Graph of the matrix $\mathcal{C}_{170}$ in two dimensional CA} 
\end{figure}
		
	\end{example}

\begin{example}
Finally, an  example with rule $1$ is given to show that the decidability criterion for regional controllability may change for the same rule, according to the size of  $\omega$. With a region of size $ |\omega|=6$, the graph of the matrix $ \mathcal{C}_1 $ contains more than one SCC while  for $ |\omega|=2 $ it contains only one strongly connected component. Consequently, the CA is not  regionally controllable in the first case and regionally controllable in the second one. 
	
		\begin{figure}[!ht]
\centering
\subfloat[rule 1, $|\omega|=6$.]{%
\resizebox*{7cm}{!}{\includegraphics{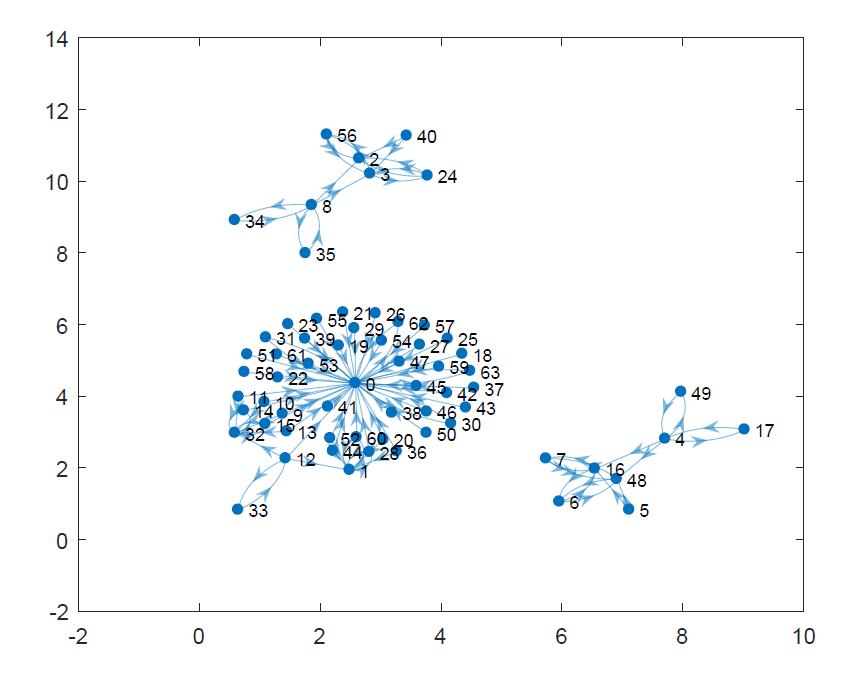}}}\hspace{5pt}
\subfloat[rule 1, $|\omega|=2$.]{%
\resizebox*{7 cm}{!}{\includegraphics{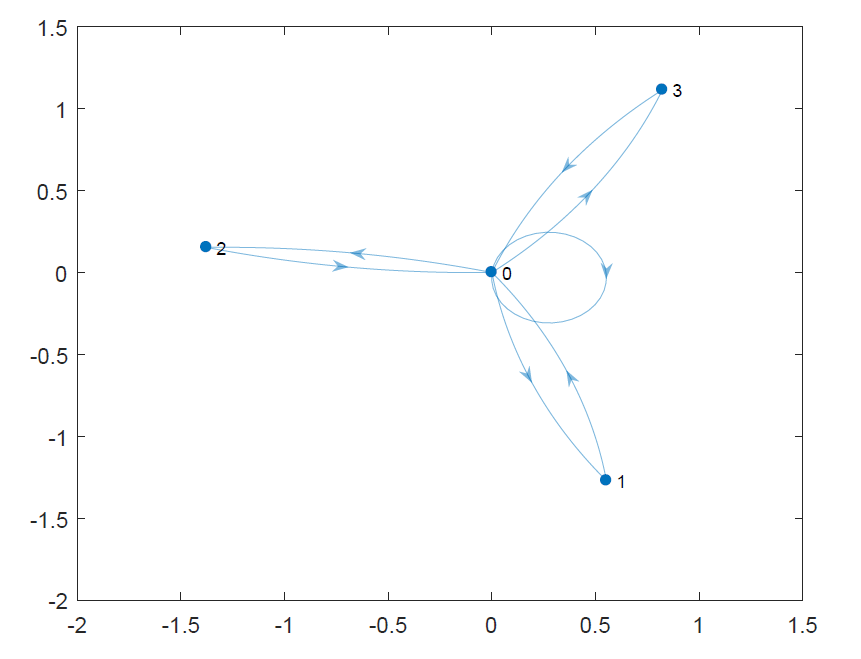}}}
\caption{Graphs of the matrix $\mathcal{C}_1$ for two sizes of $\omega$.} \label{sample-figure}
\end{figure}
\end{example}

\section{Pre-images of a regional controlled area}
Let $\{s_1^i,s_2^i, \dots, s_{n}^i\}$ be the configuration at time
$i$ of the region to be controlled. 
The idea is to find a boundary control given as a
sequence  $(\ell^0,r^0), (\ell^1,r^1), \dots, (\ell^{T-1},r^{T-1}$) so as to obtain a desired configuration $\{s_1^T,s_2^T, \dots, s_{n}^T\}$ at time $T$ 
from an initial one $\{s_1^0,s_2^0, \dots, s_{n}^0\}$. \\

Let us define  in what follows some needed notions and present the data structure required to solve the problem.  

\subsection{Distance function}

We define the distance function 
$$\Delta_i: \mbox{vertex} \mapsto \mbox{list~of~vertices}$$
that associates to each vertex $v \in \Upsilon$, the list of vertices from which $v$ can be reached within a path of length exactly $i$. 

Where $\Upsilon$  is the transition graph introduced in Section  \ref{sect:transG}. 

Therefore $\Delta_i(v)$ gives all initial configurations 
from which the desired configuration $v$ can be reached in exactly $i$ steps with the application of control.

Let $T$ be the time at 
which we want to reach a desired state. Representing  $\Delta$ as a map, we shall consecutively  construct the functions $\Delta_i$ by searching the predecessors  from $\Delta_{i-1}$. We start by $\Delta_1$ and go until  $\Delta_T$. The algorithm is given in the appendix.

If $\delta$ is the maximum number of predecessors of a vertex,  the time complexity is $O(T \times \delta \times |V|)=O(T \times \delta \times 2^{|\omega|})$.

\subsection{Path controllability}
We address two problems in this section.
\begin{problem}

Find one state configuration that can be driven to a desired state configuration  $b_{f}$ in $k$ steps and the relative control sequence.

\end{problem}

To solve this problem, we construct the distance function $\Delta_i$ for $1\leq i\leq k$. Then we consider the ancestors $b_i$ at distance $k$ of
$b_f$ (stored in $\Delta_k(b_f)$). If there is no ancestor, that means
that it is not possible to reach this state in $k$ steps. Otherwise,
 we can find the path of length $k$ with end extremity $b_f$. To do so, we
pick one predecessor of $b_f$, say $b_{k-1}$,
that can be reached in $k-1$ steps, \emph{i.e.} one among the vertices in the list
$\Delta_{k-1}(b_f)$. Then we search the first one among the
predecessors of $b_{k-1}$, say $b_{k-2}$, that can be reached in $k-2$
steps, \emph{i.e.}  one among the vertices in the list
$\Delta_{k-2}(b_f)$ and so forth until
we find $b_1$. We obtain the path of configurations $b_1, \dots,
b_{k-1}, b_f$. It just remains to find the appropriate control by
applying the rules to each configuration extended with the boundaries
$(0,0), (0,1), (1,0), (1,1)$.

In total, once the distance function is built, it takes $O(k)$ time. 

\begin{problem}

Find all the needed controls and the intermediate states of the controlled region required to obtain a desired state $b_f$ in exactly $k$ steps (\emph{i.e.}) from a state configuration $b_1$.


\end{problem}
For this problem, instead of checking if the list of ancestors is not
empty (and taking one among the vertices), we need to check if among the ancestors there is the initial configuration $b_1$. 

Therefore, the time complexity is $O(k \times \delta)$.

\section{Conclusion}
We have studied  in this paper the problem  of  regional controllability of  Boolean cellular automata focusing  on  actions performed on the boundary of a target region. We  established  some necessary and sufficient conditions using graph theory tools. We showed that  the existence of  a Hamiltonian circuit or   a strongly connected component guarantees  the regional controllability.  To obtain  the  control that allows the system to  reach the desired state during a given time horizon and starting from a given initial condition, an efficient algorithm for  generating  preimages was used.  Several examples of Elementary cellular automata has been considered. The obtained  control is not unique at this stage and the problem of optimality will be addressed afterward. A first problem of  regional controllability in minimal time is currently under study.

\section*{Acknowledgement(s)}
Sara DRIDI who was supported by the Algerian grant Averroes, would like to thank the Algerian government for this funding.

\bibliographystyle{tfnlm}
\bibliography{interactnlmsample}

\newpage 

\appendix
We present in the appendix, the algorithms to construct the 
data structures used in the paper.

\ \\

Construction of the transition graph $\Upsilon$.

\fbox{\begin{minipage}{\textwidth}\setlength{\parskip}{5mm}
\begin{tabular}{ll}
\noindent\texttt{Algorithm transGraph($d_{\omega}$, $F$)}&\\
\hspace*{2em}\texttt{$d \leftarrow d_{\omega}$}&\\
\hspace*{2em}\texttt{$d_{\Upsilon} \leftarrow 2^{d}$}&\\
\hspace*{2em}\texttt{$\Upsilon$  $\leftarrow$ $[0]_{|d_{\Upsilon}\times d_{\Upsilon}}$} & \textit{(zero matrix of size $d_\Upsilon$)}\\
\hspace*{2em}\texttt{forall $0\leq i < d_{\Upsilon}$} & \textit{(for every configuration $i$)}\\
\hspace*{4em}\texttt{$\lambda(v_1) \leftarrow F(0\cdot \breve{i}|_{d} \cdot 0) $}&\\
\hspace*{4em}\texttt{$\lambda(v_2) \leftarrow F(0\cdot \breve{i}|_{d} \cdot 1) $}&\\
\hspace*{4em}\texttt{$\lambda(v_3) \leftarrow F(1\cdot \breve{i}|_{d} \cdot 0) $}&\textit{(add the boundary controls}\\
& \textit{and apply the rule)}\\
\hspace*{4em}\texttt{$\lambda(v_4) \leftarrow F(1\cdot \breve{i}|_{d} \cdot 1]) $}&\\
\hspace*{4em}\texttt{forall $0\leq j < d_{\Upsilon}$} & \textit{(for every configuration $j$)}\\
\hspace*{6em}\texttt{if $\lambda(v_1),\lambda(v_2),\lambda(v_3)$ or $\lambda(v_4)$ equal $\breve{j}_{|d}$} & \textit{(add an edge if there}\\
 & \textit{is a boundary control)}\\
\hspace*{8em}\texttt{$\Upsilon(i,j) \leftarrow 1$} &\textit{for i that leads to j)}\\
\hspace*{2em}\texttt{return $\Upsilon$}&\\
\end{tabular}
\end{minipage}}

\ \\

Construction of the distance function.

\fbox{\begin{minipage}{\textwidth}\setlength{\parskip}{5mm}
\noindent\texttt{Algorithm distanceFunction($G_\Upsilon$, k, v)}\\
\hspace*{2em}\texttt{Let $d_\Upsilon$ be the number of vertices of $G_\Upsilon$}\\
\hspace*{2em}\texttt{Let $\Delta$ be an empty matrix of size $d_\Upsilon \times k$}\\
\hspace*{2em}\texttt{for each vertex $v$ in $G_\Upsilon$}\\
\hspace*{4em}\texttt{$\Delta(v, 1)$ $\leftarrow$ predecessors($G_\Upsilon$, $v$)} \\
\hspace*{2em}\texttt{for $1<\ell\leq k$}\\
\hspace*{4em}\texttt{for each vertex $v$ in $G_\Upsilon$}\\
\hspace*{6em}\texttt{for each vertex $u$ in $\Delta(v, \ell - 1)$}\\
\hspace*{8em}\texttt{add($\Delta(v, \ell)$, predecessors($G_\Upsilon$, $u$))} \\
\hspace*{2em}\texttt{return $\Delta$}\\
\end{minipage}}

\ \\ 

Finding the control in $k$ steps to reach the desired state.

\fbox{\begin{minipage}{\textwidth}\setlength{\parskip}{5mm}
\noindent\texttt{Algorithm pathControllability($G_\Upsilon$, $k$, $v_{init}$, $v_{desired}$)}\\
\hspace*{2em}\texttt{$p \leftarrow [v_{desired}]$}\\
\hspace*{2em}\texttt{$\Delta \leftarrow $ distanceFunction$(G_\Upsilon, k, v_{desired})$}\\
\hspace*{2em}\texttt{if $v_{init} \notin \Delta(v, k)$}\\
\hspace*{4em}\texttt{return $p$} \\
\hspace*{2em}\texttt{pred $\leftarrow v_{desired}$}\\
\hspace*{2em}\texttt{for $i$ from $k$ to $1$}\\
\hspace*{4em}\texttt{predList $\leftarrow$ predecessors($G_\Upsilon$, $pred$)}\\
\hspace*{4em}\texttt{pred $\leftarrow$ find($x \in$ predList$. v_{init} \in \Delta(x, i)$)}\\
\hspace*{4em}\texttt{add(pred, $p$)} \\
\hspace*{2em}\texttt{return $p$}\\
\end{minipage}}

\end{document}